\newtheorem{theorem}{Theorem}[section]
\newtheorem{definition}[theorem]{Definition}
\newcommand{\tr}{{\rm Tr\hskip -0.2em}~}
\begin{document}

\title{Convexity of quantum $\chi^2$-divergence}
\author{Frank Hansen}
\date{February 15 2011}

\maketitle

\begin{abstract}

The quantum $ \chi^2 $-divergence has recently been introduced
and applied to quantum channels (quantum Markov processes).  In contrast to the classical setting the 
quantum $ \chi^2 $-divergence is not unique but depends on the choice of quantum statistics.
In the reference \cite{kn:ruskai:2010:1} a special one-parameter family of quantum  $ \chi^2_\alpha(\rho,\sigma) $-divergences for density matrices were studied, and it was established that they are convex functions in $ (\rho,\sigma) $ for parameter values $ \alpha\in [0,1], $ thus mirroring the classical theorem for the   $ \chi^2(p,q) $-divergence for probability distributions $ (p,q). $ We prove that any quantum $ \chi^2 $-divergence is a convex function in its two arguments.\\[1ex]
{\bf{Key words and phrases:}}  Quantum $ \chi^2 $-divergence, quantum statistics, monotone metric, convexity, operator monotone function.

\end{abstract}

\section{Introduction}

The geometrical formulation of quantum statistics originates in a study by Chent\-sov of the classical
Fisher information.
Chentsov proved \cite{kn:censov:1982} that the Fisher-Rao metric is the
only Riemannian metric, defined on the tangent space, that is decreasing under Markov morphisms.
    Since Markov morphisms represent coarse graining or randomization, it
    means that the Fisher information is the only Riemannian metric
    possessing the attractive property that distinguishability of probability
    distributions becomes more difficult when they are observed through a
    noisy channel.

    Morozova \cite{kn:morozova:1990} extended the analysis to quantum mechanics by
    replacing Riemannian metrics defined on the tangent space of the simplex
    of probability distributions with positive definite sesquilinear
    (originally bilinear) forms $ K_\rho $ defined on the tangent space of a
    quantum system, where $ \rho $ is a positive definite state. Customarily,
    $ K_\rho $ is extended to all operators (matrices) supported by the
    underlying Hilbert space, cf. \cite{kn:petz:1996:2,kn:hansen:2006:2} for
    details. Noisy channels are in this setting represented by stochastic
    (completely positive and trace preserving) mappings,  and the
    contraction property is replaced by the monotonicity requirement
    \[
    K_{T(\rho)}(T(A),T(A))\le K_\rho(A,A)
    \]
    for every stochastic
    mapping $ T:M_n(\mathbb C)\to M_m(\mathbb C). $ Unlike the classical
    situation, these requirements no longer uniquely determine the metric.
    
    We consider the following class of functions which is used to characterize monotone metrics.
    
      \begin{definition}
$ {\cal F}_{\text{op}}$ is the class of functions $f: (0,+\infty) \to (0,+\infty)$ such that

\begin{enumerate}[(i)]
\item $f$ is operator monotone,
\item $f(t)=tf(t^{-1})$ for $ t>0, $
\item $f(1)=1.$
\end{enumerate}
\end{definition}
    
    By the combined efforts of Chentsov, Morozova and Petz  \cite{kn:petz:1996:2}  it was established that a monotone metric is given on the canonical form
\begin{equation}\label{canonical form of monotone metric}
 K_\rho(A,B)=\tr A^* c(L_\rho,R_\rho) (B),
\end{equation}
 where the so-called Morozova-Chentsov function $ c $ is of the form
    \[
    c(x,y)=\frac{1}{y f(xy^{-1})}\qquad f\colon\mathbf R_+\to\mathbf R_+
    \]
   for a function $ f\in{\cal F}_{\text{op}} $ and $ c $ is taken in the two commuting positive definite (super) operators $ L_\rho $ and $ R_\rho $ defined by setting
  \[
  L_\rho A=\rho A\quad\text{and}\quad R_\rho A=A\rho.
  \]
    It is condition $ (ii) $ in the definition above that ensures symmetry of the metric in the sense that $    K_\rho(A,B)=  K_\rho(B,A) $ for self-adjoint $ A $ and $B. $
   
      Lesniewski and Ruskai \cite{kn:ruskai:1999} gave equivalent descriptions in terms of operator convex functions and operator monotone decreasing functions. In particular
\[
 K_\rho(A,B)=\tr A^* R_\rho^{-1}k(L_\rho R_\rho^{-1}) B,      
\]
where $ k\colon\mathbf R_+\to\mathbf R_+ $ is an operator monotone decreasing function satisfying $ k(t^{-1})=tk(t) $ for $ t>0 $ and $ k(1)=1. $ The Morozova-Chentsov-Petz formalism is then recovered by inserting the operator monotone function $ f(t)=1/k(t) $ in equation (\ref{canonical form of monotone metric}).

\begin{definition}
The $ \chi^2_f $-divergence (relative to a choice of monotone metric)  is given by
\[
\chi^2_f(\rho,\sigma)=K_\sigma^c(\rho-\sigma,\rho-\sigma),
\]
where 
\[
 c(x,y)=\frac{1}{yf(xy^{-1})} \qquad x,y>0
\]
is the Morozova-Chentsov function specified by a function $ f\in  {\cal F}_{\text{op}}\,. $
\end{definition}

The functions
\[
f_\alpha(t)=\frac{2 t^\alpha}{t^{2\alpha-1}+1}\qquad t>0
\]
with parameter $ \alpha\in [0,1] $ are elements in  $  {\cal F}_{\text{op}} $ and correspond to the functions
\[
k(t)=\frac{1}{2}(t^{-\alpha}+t^{\alpha-1})\qquad t>0
\]
in the Ruskai-Lesniewski formalism. The associated Morozova-Chentsov functions are given by
\[
c_\alpha(x,y)=\frac{1}{yf_\alpha(xy^{-1})} =\frac{x^{\alpha-1}y^{-\alpha}+x^{-\alpha} y^{\alpha-1}}{2}
\]
and the divergences for positive definite density matrices are
\[
\chi^2_\alpha(\rho,\sigma)=\tr(\rho-\sigma)\sigma^{-\alpha}(\rho-\sigma)\sigma^{\alpha-1}
=\tr \rho\sigma^{-\alpha}\rho\sigma^{\alpha-1}-1.
\]
This expression is convex in $ (\rho,\sigma) $ as pointed out in \cite{kn:ruskai:2010:1}.

\section{Convexity}

Consider a function $ f\colon \mathbf R_+\to\mathbf R_+. $ The perspective (function) of $ f $ is the function $ g $ of two variables defined by setting
\[
g(t,s)=sf(ts^{-1})\qquad t,s>0.
\]
Effros showed \cite{kn:effros:2009:1} that if $ f $ is operator convex then, whenever meaningful, the perspective is a convex operator function of two variables. In particular, it becomes operator convex in the sense of Kor{\'a}nyi \cite{kn:koranyi:1961}. It is also convex in functions of commuting operators. In particular, the function
\[
(\rho,\sigma)\to g(L_\rho,R_\sigma)
\]
is convex in pairs of positive definite $ n\times n $ matrices,
equivalent to the statement that the function
\[
(\rho,\sigma)\to \tr A^* g(L_\rho,R_\sigma)A
\]
is convex for any $ n\times n $ matrix $ A. $ Similar statements are valid also for operator concave functions.
 
\begin{theorem}
The $ \chi^2_f(\rho,\sigma) $-divergence is convex in  $ (\rho,\sigma) $ for any  $ f $ in $ {\cal F}_{op}\,.  $
\end{theorem}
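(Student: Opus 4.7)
The plan is to rewrite
\[
\chi^2_f(\rho,\sigma) = \tr(\rho-\sigma)^{*} c(L_\sigma,R_\sigma)(\rho-\sigma) = \tr(\rho-\sigma)^{*} g(L_\sigma,R_\sigma)^{-1}(\rho-\sigma),
\]
using the identity $c(x,y) = 1/g(x,y)$ with $g(x,y)=y f(xy^{-1})$ the perspective of $f$. Every $f\in {\cal F}_{\text{op}}$ is operator monotone on $(0,\infty)$ and hence operator concave, so by the operator-concave analogue of Effros's perspective theorem recalled in the paragraph preceding the statement, the positive-superoperator valued function $(\rho,\sigma)\mapsto g(L_\rho,R_\sigma)$ is jointly operator concave on pairs of positive definite matrices. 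Specialising to the diagonal $\rho=\sigma$ yields that $\sigma\mapsto Y(\sigma):=g(L_\sigma,R_\sigma)$ is operator concave in $\sigma$, viewed as a map into the positive superoperators on $M_n(\mathbb C)$.

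Given this, joint convexity of $\chi^2_f$ will follow by stringing together two standard inequalities. Fix $\rho_\lambda=(1-\lambda)\rho_0+\lambda\rho_1$, $\sigma_\lambda=(1-\lambda)\sigma_0+\lambda\sigma_1$, and set $X_i=\rho_i-\sigma_i$, $Y_i=Y(\sigma_i)$, so that $X_\lambda=(1-\lambda)X_0+\lambda X_1$. Because $t\mapsto t^{-1}$ is operator decreasing, operator concavity of $Y$ gives
\[
Y(\sigma_\lambda)^{-1} \le \bigl((1-\lambda)Y_0 + \lambda Y_1\bigr)^{-1}.
\]
Sandwiching with $X_\lambda$, taking the trace, and applying the joint convexity of $(X,Y)\mapsto \tr X^{*} Y^{-1} X$ on pairs with $Y>0$ (the operator analogue of the convexity of $x^2/y$), I would obtain
\[
\tr X_\lambda^{*} Y(\sigma_\lambda)^{-1} X_\lambda \le (1-\lambda)\tr X_0^{*} Y_0^{-1} X_0 + \lambda \tr X_1^{*} Y_1^{-1} X_1,
\]
which is precisely the required inequality $\chi^2_f(\rho_\lambda,\sigma_\lambda)\le (1-\lambda)\chi^2_f(\rho_0,\sigma_0)+\lambda\chi^2_f(\rho_1,\sigma_1)$.

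The only delicate step is the operator concavity of $\sigma\mapsto g(L_\sigma,R_\sigma)$. Although $L_\sigma$ and $R_\sigma$ commute for each fixed $\sigma$, the pairs $L_{\sigma_0},L_{\sigma_1}$ and $R_{\sigma_0},R_{\sigma_1}$ need not commute, so one cannot argue coordinatewise; one really needs the joint operator concavity of the perspective on pairs of commuting positive (super)operators, combined with the diagonalization $\rho=\sigma$ above. Once that ingredient is in place --- and the author has installed it immediately before the theorem --- the remainder of the argument is a routine composition of operator convexity of inversion with the classical convexity of the trace quadratic form.
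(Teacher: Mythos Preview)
Your proposal is correct and follows essentially the same route as the paper: write $c=g^{-1}$ with $g$ the perspective of $f$, use operator concavity of the perspective to get $Y(\sigma_\lambda)^{-1}\le\bigl((1-\lambda)Y_0+\lambda Y_1\bigr)^{-1}$, and then invoke the joint convexity of $(X,Y)\mapsto \tr X^*Y^{-1}X$ (the Lieb--Ruskai inequality) to finish. Your explicit remark that the non-commutativity of $L_{\sigma_0},L_{\sigma_1}$ forces one to use the full operator concavity of the perspective rather than a coordinatewise argument is exactly the point, and the paper handles it the same way via Effros's theorem.
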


\begin{proof}
Let us consider a function  $ f $ in $ {\cal F}_{op}  $ with Morozova-Chensov function
\[
c(x,y)=\frac{1}{yf(xy^{-1})}=F(x,y)^{-1}\qquad x,y>0,
\]
where $ F(x,y)=yf(xy^{-1}) $ is the perspective of  $ f. $ Since $ f $ is operator concave we obtain that $ F(x,y) $ is operator concave as a function of two variables.  Inversion (of super operators) is decreasing. By using linearity of the mappings $ \sigma\to L_\sigma $ and $ \sigma\to R_\sigma $ we therefore obtain
\[
\begin{array}{l}
F(L_{\lambda\sigma_1+(1-\lambda)\sigma_2},R_{\lambda_1\sigma_1+(1-\lambda)\sigma_2})^{-1}\\[2ex]
=F(\lambda L_{\sigma_1}+(1-\lambda)L_{\sigma_2},\lambda R_{\sigma_1}+(1-\lambda)R_{\sigma_2})^{-1}\\[2ex]
\le \bigl(\lambda F(L_{\sigma_1},R_{\sigma_1})+(1-\lambda)F(L_{\sigma_2},R_{\sigma_2})\bigr)^{-1}
\end{array}
\]
for states (density matrices) $ \rho_1,\rho_2 $ and $ \sigma_1,\sigma_2 $ and real numbers $ \lambda\in [0,1]. $  
The divergence $ \chi^2_f(\rho,\sigma) $ is given on the form
\[
\chi^2_f(\rho,\sigma)=\tr (\rho-\sigma) c(L_\sigma,R_\sigma)(\rho-\sigma),
\]
and by setting
\[
 \rho=\lambda\rho_1+(1-\lambda)\rho_2\quad\text{and}\quad \sigma=\lambda\sigma_1+(1-\lambda)\sigma_2 
 \]
we obtain the inequality
\[
\begin{array}{l}
\chi^2_f(\rho,\sigma)=\tr (\rho-\sigma) F(L_\sigma,R_\sigma)^{-1}(\rho-\sigma) \\[2ex]
=\tr (\rho-\sigma) F(L_{\lambda \sigma_1+(1-\lambda)\sigma_2},R_{\lambda \sigma_1+(1-\lambda)\sigma_2})^{-1}(\rho-\sigma),\\[2ex]
\le \tr (\rho-\sigma)\bigl(\lambda F(L_{\sigma_1},R_{\sigma_1})+(1-\lambda) F(L_{\sigma_2},R_{\sigma_2})\bigr)^{-1}(\rho-\sigma)\\[2ex]
=\tr\bigl(\lambda(\rho_1-\sigma_1)+(1-\lambda)(\rho_2-\sigma_2)\bigr)\bigl(\lambda F(L_{\sigma_1},R_{\sigma_1})+(1-\lambda) F(L_{\sigma_2},R_{\sigma_2})\bigr)^{-1}\\[1ex]
\hfill\bigl(\lambda(\rho_1-\sigma_1)+(1-\lambda)(\rho_2-\sigma_2)\bigr)\\[2ex]
\le\lambda\tr (\rho_1-\sigma_1) F(L_{\sigma_1},R_{\sigma_1})^{-1} (\rho_1-\sigma_1)+ (1-\lambda)\tr (\rho_2-\sigma_2) F(L_{\sigma_2},R_{\sigma_2})^{-1} (\rho_2-\sigma_2)\\[2ex]
=\lambda\chi^2_f(\rho_1,\sigma_1)+(1-\lambda)\chi^2_f(\rho_2,\sigma_2),
\end{array}\]
where we in the second inequality, applied on super operators, used that the mapping 
\[
(A,\xi)\to (\xi\mid A^{-1}\xi)
\]
is jointly convex for positive invertible operators $ A $ on a Hilbert space $ H, $ and vectors $ \xi\in H, $
 cf. \cite[Proposition 4.3]{kn:hansen:2006:3}. This is also a direct consequence of convexity of the mapping
\[
(A,B)\to B^* A^{-1}B 
\]
for $ B $  arbitrary and $ A $  positive definite, cf.~\cite[Remark after Theorem 1]{kn:lieb:1974} and \cite[Remark 4.5]{kn:hansen:2006:3}. Furthermore, it is related to \cite[Theorem 3.1]{kn:ando:2010:1}.
\end{proof}

Any function $ f $ in  $ {\cal F}_{op}  $ satisfies the inequalities
\[
\frac{2t}{t+1}\le f(t) \le \frac{t+1}{2}\qquad t>0,
\]
where the smallest function in $ {\cal F}_{op}  $ corresponds to the Bures metric. We mention the following characterization \cite{kn:hansen:2006:2,kn:hansen:2008:1,kn:audenaert:2008} of the functions in $ {\cal F}_{op} . $ 

\begin{theorem}
    A function $ f $ in $ {\cal F}_{op}  $ admits a canonical representation
    \begin{equation}\label{canonical representation of f}
    f(t)=\frac{1+t}{2}\exp\left[-\int_0^1\frac{(1-\lambda^2)(1-t)^2}{(\lambda+t)(1+\lambda t)(1+\lambda)^2}\,h(\lambda)\,d\lambda\right],
    \end{equation}
    where the weight function $ h:[0,1]\to[0,1] $ is measurable.
    The equivalence class containing $ h $ is uniquely determined by $ f. $ Any function
    on the given form is in $ {\cal F}_{op}.  $
   \end{theorem}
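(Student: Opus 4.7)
The plan is to pass via L\"owner's theorem to a Pick-function extension of $f$, take the logarithm of a suitably normalized version, apply a Herglotz/Poisson-type integral representation to that logarithm, and fold the resulting integral from the negative real line onto $[0,1]$ using the involution $t\mapsto 1/t$ built into the definition of $\mathcal{F}_{op}$.

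First, L\"owner's theorem extends any $f\in\mathcal{F}_{op}$ to a holomorphic function $F$ on $\mathbb{C}\setminus(-\infty,0]$ with $F(\mathbb{C}^+)\subseteq\overline{\mathbb{C}^+}$. Because $f>0$ on $(0,\infty)$ and $f(t)\le(1+t)/2$, $F$ is non-vanishing on $\mathbb{C}^+\cup(0,\infty)$, so the auxiliary function
\[
\Psi(z)\;=\;\log\frac{1+z}{2F(z)}
\]
is analytic and single-valued on $\mathbb{C}^+$ after choosing the principal branch. The three defining conditions of $\mathcal{F}_{op}$ translate into: $\Psi(1)=0$ (from $f(1)=1$); $\Psi(t)=\Psi(1/t)$ on $(0,\infty)$ (from $f(t)=tf(1/t)$ combined with $(1+t)/2=t(1+1/t)/2$); and $\Psi\ge 0$ on $(0,\infty)$ (from the upper bound $f(t)\le(1+t)/2$). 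An analysis of the arguments of the two Pick functions $1+z$ and $F(z)$ shows that $\operatorname{Im}\Psi$ is bounded on $\mathbb{C}^+$, placing $\Psi$ in the class to which a Herglotz-type integral representation applies.

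Next, the Herglotz/Poisson representation for analytic functions with bounded imaginary part on $\mathbb{C}^+$, combined with the fact that $\Psi$ extends analytically across the positive real axis, yields
\[
\Psi(z)\;=\;c + \int_{-\infty}^{0}\Bigl[\frac{1}{t-z}-\frac{t}{1+t^2}\Bigr]\rho(t)\,dt,
\]
with bounded density $\rho$ supported on $(-\infty,0)$ and with $c$ fixed by $\Psi(1)=0$. The substitution $t=-\mu$ transfers the integration onto $\mu\in(0,\infty)$. The symmetry $\Psi(z)=\Psi(1/z)$, together with uniqueness of the Herglotz representation, forces an invariance of the transformed density under $\mu\mapsto 1/\mu$ (after accounting for the natural Jacobian), so the integral can be folded onto $\mu\in(0,1]$ by pairing each $\mu$ with $1/\mu$. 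Combining the two contributions at $\mu$ and $1/\mu$ and subtracting the value at $z=1$ mandated by $\Psi(1)=0$ collapses the integrand to the single rational kernel $(1-z)^2/[(\mu+z)(1+\mu z)]$, with the $\mu$-dependent prefactor $(1-\mu^2)/(1+\mu)^2$ emerging from the Jacobian and the width of the strip containing $\operatorname{Im}\Psi$; renormalizing the density so that the resulting function $h(\mu)$ takes values in $[0,1]$ produces the displayed representation.

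The converse direction and uniqueness are more elementary. Given any measurable $h:[0,1]\to[0,1]$, property (iii) is immediate because the kernel vanishes at $t=1$, and (ii) follows from the invariance of the kernel under $t\mapsto 1/t$ together with $(1+t)/2 = t(1+1/t)/2$. Operator monotonicity (i) is recovered by reading the analytic construction in reverse: the integral defines a holomorphic $\Psi$ on $\mathbb{C}^+$ whose imaginary part lies in a strip of width $\pi$ (because $h\le 1$), so $F(z)=\bigl((1+z)/2\bigr)e^{-\Psi(z)}$ extends to an analytic function with $F(\mathbb{C}^+)\subseteq\overline{\mathbb{C}^+}$, and L\"owner's theorem then delivers operator monotonicity of $f$. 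Uniqueness of $h$ modulo almost-everywhere equivalence follows from uniqueness of the boundary density in the Herglotz representation. The main obstacle is the middle step: controlling the precise range of $\operatorname{Im}\Psi$ on $\mathbb{C}^+$ so that after normalization $h\in[0,1]$, and then executing the fold identity to extract exactly the kernel $(1-\mu^2)(1-t)^2/[(\mu+t)(1+\mu t)(1+\mu)^2]$ while correctly tracking the Jacobian and the constant forced by $\Psi(1)=0$. This is delicate bookkeeping but introduces no genuinely new analytic input beyond L\"owner's theorem and the Herglotz/Poisson representation.
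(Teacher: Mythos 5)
The paper itself contains no proof of this theorem: it is stated as a known characterization and imported from \cite{kn:hansen:2006:2,kn:hansen:2008:1,kn:audenaert:2008}, so the only meaningful comparison is with the proofs in those references. Your skeleton --- L\"owner's theorem, passing to a logarithm, a Nevanlinna--Herglotz representation with bounded density supported in $(-\infty,0]$, and folding onto $[0,1]$ through the symmetry $f(t)=tf(t^{-1})$ --- is exactly the route taken there. However, the steps you leave vague or assert loosely are precisely the load-bearing ones, and two of them are wrong as written, so the proposal has genuine gaps.

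(i) The symmetry forces \emph{anti}-symmetry, not invariance, of the boundary density: $\Psi$ is real on $(0,\infty)$, so for $\lambda<0$ one has $1/(\lambda+i0)=1/\lambda-i0$, and Schwarz reflection turns $\Psi(z)=\Psi(1/z)$ into $\rho(\lambda)=-\rho(1/\lambda)$ a.e. An invariant density would fold into a sum-type kernel and could not produce (\ref{canonical representation of f}). (ii) The bound $h\in[0,1]$ is not obtained by ``renormalizing'' the density (rescaling $\rho$ changes $f$) and has nothing to do with strip widths; it is automatic once you work with $\log F$ instead of $\Psi$: since $F$ is a non-vanishing Pick function, $\log F$ is Pick with $\operatorname{Im}\log F\in[0,\pi]$, hence has Nevanlinna density $w\in[0,1]$ supported in $(-\infty,0]$; the density of $\log\frac{1+z}{2}$ is $\chi_{(-\infty,-1)}$; the symmetry reads $w(\lambda)+w(1/\lambda)=1$ a.e.; and therefore $h(\mu):=w(-\mu)\in[0,1]$ on $(0,1)$ with no adjustment whatsoever. (iii) The ``delicate bookkeeping'' you defer is the entire content of the explicit formula, and it is elementary partial fractions rather than Jacobians or strip widths: normalizing the Nevanlinna kernel to vanish at $z=1$ (this absorbs your constant $c$ via $\Psi(1)=0$), i.e.\ using $\frac{1}{\lambda-t}-\frac{1}{\lambda-1}$, the contributions of $\lambda=-\mu$ and $\lambda=-1/\mu$ per unit of $h(\mu)\,d\mu$ are $\frac{t-1}{(1+\mu)(\mu+t)}$ and $\frac{1-t}{(1+\mu)(1+\mu t)}$ (the Jacobian $\mu^{-2}$ and the anti-symmetry both enter here), and their sum is
\[
\frac{t-1}{1+\mu}\left(\frac{1}{\mu+t}-\frac{1}{1+\mu t}\right)
=-\frac{(1-\mu)(1-t)^2}{(1+\mu)(\mu+t)(1+\mu t)}
=-\frac{(1-\mu^2)(1-t)^2}{(\mu+t)(1+\mu t)(1+\mu)^2}\,,
\]
which is exactly the stated kernel. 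Finally, your converse argument is invalid as stated: knowing that $\operatorname{Im}\Psi$ lies in \emph{some} strip of width $\pi$ does not give $F(\mathbb C^+)\subseteq\overline{\mathbb C^+}$, because $\arg F=\arg\frac{1+z}{2}-\operatorname{Im}\Psi$, and what matters is where the strip sits relative to $\arg(1+z)$, not its width. The correct converse unfolds the formula: set $w(\lambda)=h(-\lambda)$ on $(-1,0)$ and $w(\lambda)=1-h(-1/\lambda)$ on $(-\infty,-1)$; then $w\in[0,1]$, so $\log F$ is Pick with imaginary part in $[0,\pi]$, hence $F$ is Pick and positive on $(0,\infty)$, and L\"owner's theorem yields operator monotonicity; conditions (ii) and (iii) of the definition then follow by the direct checks you indicate.
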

   
   Notice that the integral kernel is non-negative for every $ t>0. $ The representation induces an order relation $ \preceq $ in  $ {\cal F}_{op}  $ stronger than the point-wise order by setting $ f\preceq g $ if the representing weight functions $ h_f $ and $ h_g $ satisfies $ h_f \ge h_g $ almost everywhere. With this order relation $ ({\cal F}_{op}\,, \preceq)  $ becomes a lattice, inducing a lattice structure on the set of quantum $ \chi^2 $ divergences. It is compatible with the parametrization of the Wigner-Yanase-Dyson metrics, cf. \cite[Theorem 2.8]{kn:audenaert:2008}. 
   
   The representation in (\ref{canonical representation of f}) may be used to construct families of metrics that increase monotonously from the smallest (Bures) metric to the largest. In fact any family of weight functions that decreases monotonously from the constant  $1$ to the zero function will induce this property. In \cite[Proposition 3]{kn:hansen:2006:2} we considered the constant weight functions $ h_\alpha=\alpha $ for $ 0\le\alpha\le 1 $ and obtained in this way a family of metrics
\[
f_\alpha(t)=t^\alpha\left(\frac{1+t}{2}\right)^{1-2\alpha}\qquad t>0,
\]
that decreases monotonously from the largest monotone metric down to the Bures metric for $ \alpha $ increasing from $ 0 $ to $ 1. $ In the Lesniewski-Ruskai representation that corresponds to the functions
\[
k_\alpha(\omega)=w^{-\alpha}\left(\frac{1+\omega}{2}\right)^{2\alpha-1}
\]
as mentioned in \cite[Appendix A]{kn:ruskai:2010:1}.



\vfill

      \noindent Frank Hansen: Institute for International Education, Tohoku University, Japan. Email: frank.hansen@m.tohoku.ac.jp.

\end{document}